\newtheorem{theorem}{Theorem}
\newtheorem{lemma}[theorem]{Lemma}
\newtheorem{definition}[theorem]{Definition}
\theoremstyle{definition}
\newtheorem*{xrem}{Remark}
\numberwithin{equation}{section}  
\numberwithin{theorem}{section}  
\newcommand{\CC}{\mathbb{C}}
\newcommand{\RR}{\mathbb{R}}
\title{Quantifying State Transfer Strength on Graphs with Involution}
\author{Yujia Shi, Gabor Lippner}
\begin{document}
\date{}
\maketitle

\section*{Abstract}
This paper discusses continuous-time quantum walks and asymptotic state transfer in graphs with an involution. By providing quantitative bounds on the eigenvectors of the Hamiltonian, it provides an approach to achieving high-fidelity state transfer by strategically selecting energy potentials based on the maximum degrees of the graphs. The study also involves an analysis of the time necessary for quantum transfer to occur.

%%%%%%%%%%%%%%%%%%%%
\section{Introduction}
%%%%%%%%%%%%%%%%%%%%

We study quantum transport phenomena in a network of spin particles with $XX$ coupling, and with magnetic fields applied to the nodes. The network is given as a simple and connected graph $G$ with vertex set $V$ and an edge set $E$. The magnetic fields are described by a $Q: V \rightarrow \mathbb{R}$ function. 

It is well-known (see e.g.~\cite{survey}) that node-to-node quantum transportation can be analyzed by restricting the system to its single-excitation subspace, where the solution of the Schr\"odinger equation becomes what is known as the continuous-time quantum walk:
\begin{align}
    \psi(t)= e^{i t H} \psi(0).
\end{align}
Here, the Hamiltonian $H = A_G + D_Q$ where $A_G$ is the adjacency matrix of the graph $G$ while $D_Q$ is a diagonal matrix containing the values of the function $Q$. The function $\psi(t) : V \to \CC$ described the state of the system at time $t$. The probability of the quantum walk getting from node $u$ to node $v$ at time $t$ is given by $p(t) = |\langle u|e^{itH}|v\rangle|^2$. 

In the ideal scenario, if $p(t)=1$ at some time $t$, then we say there is a \textbf{perfect state transfer} between vertices $u$ and $v$. However, achieving perfect state transfer demands strict conditions. Notably, C. Godsil demonstrated that \textbf{ratio condition} is necessary for the existence of such vertex pairs \cite{when_pst}. 

It has been observed before that large, equal, magnetic fields applied to $u$ and $v$ can lead to transfer strength close to 1. In physics, this phenomenon is referred to as quantum tunneling. Let us define the transfer fidelity of the network from to $u$ to $v$ by 
\begin{equation}\label{eq:fidelity_limit}
    F(Q) = F(G,u,v,Q) = \sup_{t\geq 0} p(t).
\end{equation} 

In the rest of the paper we assume that $Q(u)=Q(v)$ and $Q(x) = 0$ for all nodes $x \neq u,v$. By a slight abuse of notation we will refer to the value $Q(u)=Q(v)$ simply as $Q$.

Y. Lin, G. Lippner, and S-T. Yau~\cite{Tunneling_2012} showed that $\lim_{Q \to \infty} F(G,u,v,Q) = 1$ given that the graph exhibits certain local symmetries around $u$ and $v$. While they provide a complete characterization of when $F(Q) \to 1$, their result is not quantitative.

%
%\begin{definition}
%    Two vertices are \textbf{$k$-cospectral} if, for any $t\leq k$, the probabilities of return at time $t$ in a random walk are the same. The \textbf{cospectrality} of $u$ and $v$, denoted by $co(u,v)$ is the maximum number $m$ such that $u$ and $v$ are $m$-cospectral.
%\end{definition}
%\begin{theorem}\cite{Tunneling_2012}\label{thm:tunneling}
%    Assume vertices $u$ and $v$ have potential $Q$ that is much larger than the other vertices. Denote their distance in the graph by $d$. If $co(u,v)\geq d$ then
%    \begin{align}
%        \liminf_{Q\rightarrow \infty}{\sup_{t\geq 0} p(t)}=1.
%   \end{align}
%    This is called \textbf{asymptotic state transfer}. Additionally, if the graph has an involution that preserves $Q$, then the tunneling time is of order $Q^{\text{dist}(u,v)-1}$.
%\end{theorem}
% Later on, M. Kempton, G. Lippner, and S-T. Yau further showed the existence of $Q$ for any given small positive number $\epsilon$, ensuring $p(t)>1-\epsilon$ on a symmetric graph \cite{involution}. This kind of quantum walk, where the probability approaches 1, is often referred to as \textbf{pretty good state transfer}. 

Recently, C.M. van Bommel and S. Kirkland~\cite{path} quantified the convergence in \eqref{eq:fidelity_limit} for the case of the two endpoints of a path graph. 

\begin{theorem}[\cite{path}] Let $G$ be the path graph on at least 3 nodes and $u,v$ be its two endpoints. Then for $Q \geq \sqrt{2}$ 
\begin{equation}\label{eq:kirkland_bound}
     F(Q) > \left(\frac{Q^2-2}{Q^2}\right)^2 \approx 1 - \frac{4}{Q^2}\end{equation}
 
Furthermore this fidelity is reached within time $O(Q^{n-2})$
\end{theorem}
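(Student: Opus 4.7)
The plan is to exploit the reflection involution $\sigma$ of the path sending $j\mapsto n-1-j$. Since $Q(u)=Q(v)$ and $Q$ vanishes elsewhere, $[\sigma,H]=0$, so the state space decomposes as $V_+\oplus V_-$ according to the eigenspaces of $\sigma$. With $|s\rangle=(|u\rangle+|v\rangle)/\sqrt 2\in V_+$ and $|a\rangle=(|u\rangle-|v\rangle)/\sqrt 2\in V_-$, a direct computation using the fact that $e^{itH}$ preserves each $V_\pm$ gives the fundamental identity
\[
\langle v|e^{itH}|u\rangle=\tfrac12\bigl(\langle s|e^{itH}|s\rangle-\langle a|e^{itH}|a\rangle\bigr).
\]
Expanding in eigenvectors $\phi^s_k,\phi^a_k$ of $H|_{V_\pm}$ with eigenvalues $\lambda^s_k,\lambda^a_k$ (ordered largest first) and setting $\alpha_k=|\langle s|\phi^s_k\rangle|^2$, $\beta_k=|\langle a|\phi^a_k\rangle|^2$, the transition amplitude becomes $\tfrac12\sum_k(\alpha_k e^{it\lambda^s_k}-\beta_k e^{it\lambda^a_k})$. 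Because $\langle s|H|s\rangle=\langle a|H|a\rangle=Q$, Rayleigh--Ritz gives $\lambda^s_1,\lambda^a_1\geq Q$.

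The critical observation is that at the resonance time $t^{*}=\pi/(\lambda^s_1-\lambda^a_1)$ one has $e^{it^{*}\lambda^s_1}=-e^{it^{*}\lambda^a_1}$, so the two dominant terms combine constructively. Applying the triangle inequality to the remaining mass $\sum_{k\geq 2}\alpha_k=1-\alpha_1$ and $\sum_{k\geq 2}\beta_k=1-\beta_1$ yields
\[
|\langle v|e^{it^{*}H}|u\rangle|\geq(\alpha_1+\beta_1)-1.
\]
Hence it suffices to establish $\alpha_1+\beta_1\geq 2-2/Q^2$ together with the gap estimate $\lambda^s_1-\lambda^a_1=\Theta(Q^{-(n-2)})$, which would simultaneously imply the fidelity bound $F(Q)\geq((Q^2-2)/Q^2)^2$ and the time bound $t^{*}=O(Q^{n-2})$.

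For both estimates I would use the hyperbolic ansatz for eigenvectors of the path Hamiltonian with a two-point potential. Setting $m=(n-1)/2$, the top symmetric and antisymmetric eigenvectors can be written as $\phi^s_1(j)\propto\cosh((j-m)\eta_s)$ and $\phi^a_1(j)\propto\sinh((j-m)\eta_a)$ with $\lambda=2\cosh\eta$, and the boundary condition $\phi(1)=(\lambda-Q)\phi(0)$ at the endpoint $u$ becomes
\[
\frac{\cosh((m-1)\eta_s)}{\cosh(m\eta_s)}=\lambda^s_1-Q,\qquad \frac{\sinh((m-1)\eta_a)}{\sinh(m\eta_a)}=\lambda^a_1-Q.
\]
Since $\alpha_1=2\cosh^2(m\eta_s)/\sum_{j=0}^{n-1}\cosh^2((j-m)\eta_s)$ and $\beta_1$ is the analogous $\sinh$-expression, closed-form summation produces $\alpha_1,\beta_1$ as rational functions of $e^{\eta_{s,a}}$; combined with the boundary equations these become explicit functions of $Q$. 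A direct expansion in $e^{-\eta}\sim 1/Q$ then gives $\alpha_1+\beta_1\geq 2-2/Q^2$. Subtracting the two boundary equations and applying $\cosh a\sinh b-\sinh a\cosh b=\sinh(b-a)$ in combination with the rough value $\eta\sim\log Q$ yields $\lambda^s_1-\lambda^a_1=\Theta(e^{-(n-2)\eta})=\Theta(Q^{-(n-2)})$.

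The main obstacle is keeping the $\alpha_1+\beta_1$ bound sharp all the way down to the threshold $Q=\sqrt 2$: the natural asymptotic argument controls only the leading constant, while the stated inequality must hold for every $Q\geq\sqrt 2$. Handling $n=3$ separately (where $V_-$ is one-dimensional, so $\beta_1=1$ automatically and only $\alpha_1\geq 1-2/Q^2$ is needed) and then invoking monotonicity of the implicit functions $\eta_{s,a}(Q)$ defined by the two boundary equations should close the gap for $n\geq 4$; but it is precisely this explicit-constant check, rather than any conceptual step, that will require the most delicate bookkeeping.
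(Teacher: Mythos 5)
This statement is quoted by the paper from van Bommel--Kirkland \cite{path}; the paper gives no proof of it, so there is nothing internal to compare against line by line. Your skeleton, however, is exactly the resonance-time strategy the paper itself uses for its main theorem: split into symmetric/antisymmetric sectors, wait until the two top eigenphases are $\pi$ out of sync, and bound the amplitude below by $(\alpha_1+\beta_1)-1$. That reduction is correct (I checked the identity $\langle v|e^{itH}|u\rangle=\tfrac12(\langle s|e^{itH}|s\rangle-\langle a|e^{itH}|a\rangle)$, the resonance cancellation, and that $\alpha_1+\beta_1\geq 2-2/Q^2$ does imply $F(Q)\geq((Q^2-2)/Q^2)^2$; for $n=3$ the needed inequality even verifies exactly).

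The gap is that the entire quantitative content of the theorem --- the inequality $\alpha_1+\beta_1\geq 2-2/Q^2$ valid for \emph{every} $n\geq 3$ and \emph{every} $Q\geq\sqrt2$ --- is asserted rather than proved. An expansion in $e^{-\eta}\sim 1/Q$ can only deliver $\alpha_1+\beta_1=2-2/Q^2+O(Q^{-4})$ with an uncontrolled sign on the error term, which is not the one-sided bound you need; as you yourself note, this is where all the work lies, and it is precisely the "careful asymptotic analysis" the present paper attributes to \cite{path}. Moreover, the machinery you propose for carrying it out breaks in part of the stated range: the ansatz $\lambda=2\cosh\eta$ with $\eta$ real requires $\lambda>2$, and the top antisymmetric eigenvalue can drop below $2$ near the threshold (e.g.\ for $n=4$, $Q=\sqrt2$ the antisymmetric block is $\begin{pmatrix}\sqrt2&1\\1&-1\end{pmatrix}$ with top eigenvalue $\approx 1.77$), so the $\sinh$ boundary equation and the $\sinh(b-a)$ subtraction trick for the gap $\lambda_1^s-\lambda_1^a$ do not apply verbatim there; one would need the trigonometric branch or a separate crude argument for small $Q$ (where, admittedly, the claimed bound is nearly vacuous). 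Until the uniform lower bound on $\alpha_1+\beta_1$ and the gap estimate are actually established across the whole range, this remains a correct plan rather than a proof.
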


These bounds are derived from a careful asymptotic analysis of the difference between the two largest eigenvalues $\lambda_1-\lambda_2$ and their corresponding eigenvectors. 

An important feature of \eqref{eq:kirkland_bound} is that it doesn't depend on the length of the path, $n$. In this paper, we prove, in the same spirit, a general lower bound that holds for any graph with an involution, along with an assessment of the time required for quantum transfer to occur. 

\begin{theorem}\label{thm:main}
Let $G$ be a graph with an involution $\tau: V(G) \to V(G)$ and let $u = \tau(v) \in V(G)$ and maximum degree $m$. Then for $Q \geq m$ 
\[ F(Q) > 1- \frac{16 \sqrt{m+1}}{\sqrt{Q}}\]
and this fidelity is achieved within time $O(Q^{d(u,v)-1})$ where $d(v
_i,v_j)$ denotes the distance between two vertices in the graph. 
\end{theorem}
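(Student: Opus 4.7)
The plan is to exploit the involution $\tau$ to block-diagonalize $H$ and isolate two dominant eigenvectors whose near-degeneracy drives the transfer. Let $U_\tau$ denote the permutation unitary induced by $\tau$; it commutes with $A_G$ and with $D_Q$ (since $Q$ is $\tau$-invariant), hence with $H$, so the $\pm 1$-eigenspaces $V^\pm$ of $U_\tau$ are $H$-invariant. The symmetric/antisymmetric combinations $|\pm\rangle := \tfrac{1}{\sqrt{2}}(|u\rangle\pm|v\rangle) \in V^\pm$ give $|u\rangle = \tfrac{1}{\sqrt{2}}(|+\rangle+|-\rangle)$ and $|v\rangle = \tfrac{1}{\sqrt{2}}(|+\rangle-|-\rangle)$, and because $e^{itH}$ preserves $V^\pm$,
\[\langle v|e^{itH}|u\rangle = \tfrac{1}{2}\bigl(\langle +|e^{itH}|+\rangle - \langle -|e^{itH}|-\rangle\bigr),\]
reducing the problem to analyzing two scalar return amplitudes.

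A key observation is that any $w\in V^\pm$ orthogonal to $|\pm\rangle$ vanishes at both $u$ and $v$, because $w(v)=\pm w(u)$ together with $\langle w|\pm\rangle=0$ force $w(u)=0$. Hence $D_Q$ acts only on the span of $|\pm\rangle$ inside $V^\pm$, and in the orthogonal decomposition $V^\pm = \mathrm{span}(|\pm\rangle)\oplus W^\pm$ with $W^\pm := V^\pm \cap |\pm\rangle^\perp$,
\[H|_{V^\pm} = \begin{pmatrix} Q \pm A_{uv} & b_\pm^T \\ b_\pm & M^\pm \end{pmatrix},\]
with direct estimates giving $\|b_\pm\|\leq\sqrt{2m}$ and $\|M^\pm\|\leq m$. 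For $Q$ sufficiently large relative to $m$, Cauchy interlacing isolates a unique eigenvalue $\lambda_\pm$ of $H|_{V^\pm}$ near $Q\pm A_{uv}$ with all others trapped in $[-m,m]$; iterating the Schur-complement equation
\[\lambda_\pm = Q \pm A_{uv} + b_\pm^T(\lambda_\pm I - M^\pm)^{-1} b_\pm\]
then gives $|\lambda_\pm-(Q\pm A_{uv})|=O(m/Q)$ and $\|\phi_\pm - |\pm\rangle\|=O(\sqrt{m}/Q)$ for the corresponding normalized eigenvector $\phi_\pm$.

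Writing $|\pm\rangle = d_\pm\phi_\pm + r_\pm$ with $r_\pm\perp\phi_\pm$ gives $\|r_\pm\|^2=O(m/Q^2)$, and since $e^{itH}$ maps $r_\pm$ inside $\phi_\pm^\perp$,
\[\langle\pm|e^{itH}|\pm\rangle = |d_\pm|^2 e^{it\lambda_\pm} + \langle r_\pm|e^{itH}|r_\pm\rangle,\]
with the residual bounded in modulus by $\|r_\pm\|^2$. At the critical time $t^*:=\pi/(\lambda_+-\lambda_-)$ the dominant contributions combine to $\tfrac{1}{2}(|d_+|^2+|d_-|^2)e^{it^*\lambda_+}$, of modulus $1-O(m/Q^2)$; incorporating the residuals via the reverse triangle inequality yields a lower bound on $F(Q)$ that, after tracking explicit constants, implies the stated $1 - 16\sqrt{m+1}/\sqrt{Q}$ in the regime $Q\geq m$.

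The main technical obstacle is the \emph{lower} bound $|\lambda_+-\lambda_-|\geq c\,Q^{-(d(u,v)-1)}$ needed to turn $t^*=\pi/(\lambda_+-\lambda_-)$ into the stated transfer time. Expanding the Schur-complement resolvent as a Neumann series $\sum_{k\geq 0} b_\pm^T(M^\pm)^k b_\pm/\lambda_\pm^{k+1}$, the difference $\lambda_+-\lambda_-$ is driven at order $k=d(u,v)-2$ by matrix elements counting length-$d(u,v)$ walks from $u$ to $v$ through the graph: the shortest-path structure of $G$ guarantees this leading coefficient is nonzero, while each denominator contributes a factor $Q^{-1}$, giving $t^* = O(Q^{d(u,v)-1})$. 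Showing that higher-order corrections cannot cancel the leading contribution is the core combinatorial step.
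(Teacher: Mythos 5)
Your proposal follows the same skeleton as the paper's argument --- split $H$ along the $\pm1$ eigenspaces of the involution, show the two top eigenvalues $\lambda_\pm$ carry almost all of the weight of $|u\rangle$ and $|v\rangle$, let them interfere constructively at $t^*=\pi/(\lambda_+-\lambda_-)$, and control $t^*$ by counting walks --- but your middle step is genuinely different and in fact sharper. The paper lower-bounds $\varphi_1(u)^2$ and $\varphi_2(u)^2$ by $\tfrac12-O(\sqrt{m/Q})$ using Rayleigh-quotient test vectors $y_i=Q^{-\mathrm{dist}}$ together with Cauchy--Schwarz, and this $\sqrt{m/Q}$ loss is exactly where the $1-16\sqrt{m+1}/\sqrt{Q}$ in the statement comes from. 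Your observation that $D_Q$ restricted to each sector $V^\pm$ is rank one and supported on $|\pm\rangle$, so that the Schur complement gives $\|(\lambda_\pm-M^\pm)^{-1}b_\pm\|=O(\sqrt{m}/Q)$ and hence $|d_\pm|^2=1-O(m/Q^2)$, yields $F(Q)\ge 1-O(m/Q^2)$. That is consistent with the van Bommel--Kirkland rate for the path and strictly stronger than the paper's bound; since $m/Q^2\le 1/Q$ when $Q\ge m$, and the stated inequality is vacuous unless $Q>256(m+1)$ (so the perturbative regime costs nothing), your route does imply the theorem once constants are tracked. One point you should make explicit: the isolation of $\lambda_\pm$ follows cleanly from Cauchy interlacing of $M^\pm$ inside $H|_{V^\pm}$, which puts every other eigenvalue of each sector below $m$; you gesture at this but it is the fact that makes $\lambda_+=\lambda_1$ and $\lambda_-=\lambda_2$ and keeps the residual terms oscillating harmlessly.

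The genuine gap is the one you name yourself: the lower bound $\lambda_+-\lambda_-\gtrsim Q^{-(d(u,v)-1)}$, without which the time claim $t^*=O(Q^{d(u,v)-1})$ is unproven, and you give no argument that the order-$k$ terms of the Neumann series cannot cancel the leading length-$d$ contribution. The paper's resolution is a walk-generating-function identity: with $Z_{xy}(\lambda)=\sum_{P:x\to y}\lambda^{-|P|}$ summed over \emph{all} walks, the symmetric/antisymmetric eigenvectors satisfy $\lambda_\pm(Z_{vv}(\lambda_\pm)\pm Z_{vv'}(\lambda_\pm))=\lambda_\pm-Q$, and subtracting the two relations gives
\begin{align*}
\lambda_+-\lambda_-=\sum_{P:v\to v}\Bigl(\lambda_+^{1-|P|}-\lambda_-^{1-|P|}\Bigr)+\sum_{P:v\to v'}\Bigl(\lambda_+^{1-|P|}+\lambda_-^{1-|P|}\Bigr).
\end{align*}
The crucial structural facts are that every term of the second sum is \emph{positive} (the expansion is indexed by unsigned walks, so no cancellation can occur, and the shortest $v\to v'$ walk has length exactly $d$), while every term of the first sum equals $-(\lambda_+-\lambda_-)$ times a positive quantity and can therefore be moved to the left-hand side, where it only rescales the gap. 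This is precisely the "core combinatorial step" you flag. In your Schur-complement language the analogue is showing that the difference of the two resolvent expansions reduces to a nonnegative sum over $u\to v$ walks; the extra wrinkle you would have to handle, which the paper's rearrangement is designed for, is that $\lambda_+$ and $\lambda_-$ appear in different resolvents, so the two series cannot be subtracted termwise without first accounting for the self-returning walks. The piece is fillable, but it is a real missing step, not a formality.
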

It turns out that to achieve the quantum state transfer from a vertex to its image with fidelity $1-\epsilon$, it is sufficient to choose the energy potential based on the maximum degree of the graph, and the potential is of order $O(\epsilon^{-2})$. This is in contrast to the energy level of $O(\epsilon^{-0.5})$ required on a path, as derived from the lower bound on fidelity given in \cite{path}. While the order of $Q$ is higher, this approach extends the application to any graph that has an involution. This result is valuable in determining the minimal energy consumption necessary to achieve a specific probability level. Remarkably, these bounds depend on the maximum degree of the graph and are independent of its size.

The paper is structured as follows: in Section~\ref{sec:prelim} we introduce the basic setup and all relevant notation for graphs with an involution and the corresponding Hamiltonians. In Section~\ref{sec:main} we prove the main result modulo estimates of the eigenvalues and eigenvectors. We provide these in Sections~\ref{se: lambda} and~\ref{se: phi} respectively. Finally in Section~\ref{se:t} we analyse the time required to achieve strong state transfer.

%%%%%%%%%%%%%%%%%%%
\section{Preliminaries}\label{sec:prelim}
%%%%%%%%%%%%%%%%%%

%%%%%
\subsection{Spectral Decomposition}
%%%%%

In the current setup, the Hamiltonian $H$ governing the continuous time quantum walk is the sum of the adjacency matrix of the graph and a diagonal matrix. In the matrix presented below, $Q: V \to \RR$ represents the energy potential at the vertices and the off-diagonal entry $H_{ij}=1$ if there is an edge between $v_i$ and $v_j$:
$$H=\begin{bmatrix}
Q(v_1) & \mathbb{1}(v_1\sim v_2) & ... & \mathbb{1}(v_1\sim v_n)\\
\mathbb{1}(v_1\sim v_2) & Q(v_2) & ... & \mathbb{1}(v_2\sim v_n)\\
\vdots\\
 \mathbb{1}(v_1\sim v_n)&...&...& Q(v_n)
\end{bmatrix}.$$
Since $H$ is a real symmetric $n\times n$ matrix, it has $n$ real eigenvalues, $\lambda_1\geq\lambda_2\geq...\geq\lambda_n$ and their corresponding eigenvectors $\varphi_1, \varphi_2,...,\varphi_n$ which together form an orthonormal basis. This spectral decomposition of $H$ helps us to compute the solution to the Schr\"odinger equation. Given a vector representing the initial state $\psi(0)$, it can be written as a linear combination of the orthonormal basis $\psi(0)=\sum_{j=1}^n c_j\varphi_j$. In the context of this problem, $\psi(0)$ is usually the characteristic vector of the starting vertex $u$, meaning the system starts at vertex $u$ with probability $1$. It immediately follows that 
\begin{align}
    \psi(t)=\sum_{j=1}^n c_j e^{it\lambda_j}\varphi_j.
\end{align}
If our objective is to achieve strong state transfer between vertices $u$ and $v$, we primarily aim to find a system such that $\psi(t)\cdot e_v$ has a significantly large squared norm 
\begin{align}\label{eq:p(t)}
    p(t)=|\sum_{j=1}^n\varphi_j(u) \varphi_j(v) e^{it\lambda_j}|^2
.\end{align} Here, we consider the eigenvector $\varphi_j$ as a function $\varphi_j: V\rightarrow\mathbb{R}$ that returns the corresponding entry of an vertex. Therefore, the main focus of our study revolves around investigating how the structure of the graph and the energy potential assigned to its vertices influence the eigenvalues and eigenvectors at the endpoints. 
% an intuition behind quantum tunneling in Theorem \ref{thm:tunneling} is given by Gershgorin Circle Theorem.
% \begin{theorem}\label{thm:circlethm}
%     (Gershgorin Circle Theorem)
%     Let $A$ be a $n\times n$ complex matrix, then every eigenvalue of $A$ lies in a Gershgorin disc, which is centered at a diagonal entry $a_{ii}$ and has radius equal to the sum of the absolute values of the off-diagonal entries in the same row $\sum_{i\neq j} |a_{ij}|$.
% \end{theorem}
For instance, if the energy potential $Q$ on vertices $u$ and $v$ is relatively large compared to the potential on other vertices, and $Q\gg\max_{x\in V}\deg(x)$, then there are two eigenvalues that are significantly larger than the rest $\lambda_1\geq\lambda_2 \gg \lambda_3\geq ...\geq\lambda_n$.

%%%%%
\subsection{Graphs with Involution}\label{se:gw/involution}
%%%%%

\begin{definition}
    Vertices $u$ and $w$ are \textbf{strongly cospectral} if for any eigenvector $\varphi$ of $H$, $\varphi(u)=\pm \varphi(w)$. They are \textbf{cospectral} if this holds for at least a given orthonormal basis of eigenvectors $\varphi_1,\dots, \varphi_n$.
\end{definition}
In many quantum walk studies, a common aspect explored is the presence of strong cospectrality \cite{PGST_2017} or, to some extent, cospectrality \cite{Tunneling_2012}, since strong cospectrality is a necessary condition for two vertices to have perfect state transfer. Consequently, if the graph structure already exhibits cospectrality, it becomes easier for us to show strong state transfer. A notable example of such a graph is one with involution.
\begin{definition}
$G$ is a \textbf{graph with involution} if there is a bijection from the set of vertices to itself $\sigma : V (G)\rightarrow V (G)$ which satisfies the following conditions,

\begin{itemize}
    \item $\sigma\circ\sigma(u)=u$
    \item if $u\sim v$ then $\sigma(u) \sim \sigma (v)$
    \item $\sigma$ preserves the potential on $V$, that is $Q(v)=Q(\sigma(v))$
\end{itemize}

\end{definition}

For simplicity, we denote $\sigma(v)$ by $v'$ in the rest of this paper. Let $S=\{ v\in V| v=v'\}$ be the set of fixed vertices. Select a vertex from each pair of $\{v,v'\}$ that are not fixed by $\sigma$, then we get a partition of the vertex set $V=N \cup \sigma N \cup S$. Denote the sizes of the subsets by  $s=|S|$ and $k=|N|$. This divides the Hamiltonian into a block matrix
\begin{align}
H=
    \begin{bmatrix}
                    H'   &  A_\sigma & A_S\\
                    A_\sigma   &  H' & A_S\\
                    A_S^T     &  A_S^T & H_S
    \end{bmatrix}.
\end{align}
The $k\times k$ matrix $H'$ and the $s\times s$ matrix $H_S$ are the Hamiltonians of the subgraphs induced by $N$ and $S$ respectively. The $k\times k$ matrix $A_\sigma$ contains all the edges between $N$ and $\sigma N$. The $k\times s$ matrix $A_S$ contains all the edges between $N$ and $S$.
\begin{lemma}\label{le:H}\cite{involution}
    Let graph $G$ be a graph with involution $\sigma$, and $Q$ be its potential function. The Hamiltonian of $G$ has eigenvalues $\lambda_1\geq\lambda_2\geq...\geq\lambda_n$ and corresponding eigenvectors $\varphi_1$, $\varphi_2$ ...  $\varphi_n$. Then every $\lambda_j$ is either the 
    eigenvalue of 
    \begin{align}
    H^+=\begin{bmatrix}
    H' + A_\sigma   &  A_S\\
    2A_S^T          &  H_S
        \end{bmatrix}
    \end{align}
    or the eigenvalue of $H^-=H'-A_\sigma$.
\end{lemma}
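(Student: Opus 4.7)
The plan is to exploit the symmetry of $H$ under $\sigma$ by diagonalizing the associated permutation matrix. Let $P$ be the $n\times n$ permutation matrix representing $\sigma$ on $V$. Since $\sigma$ is an involution, $P^2 = I$, and since $\sigma$ preserves both the edge set and the potential $Q$, the matrix $P$ commutes with $H$. Therefore $\mathbb{R}^n$ decomposes orthogonally into the $+1$ and $-1$ eigenspaces of $P$, each of which is $H$-invariant, so $H$ may be simultaneously diagonalized on each piece.

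Next I would identify the two eigenspaces concretely in the block decomposition $V = N \cup \sigma N \cup S$. A vector $\varphi$ lies in the $(-1)$-eigenspace of $P$ iff $\varphi(v) = -\varphi(\sigma(v))$ for all $v$; evaluating at fixed vertices forces $\varphi|_S = 0$, so such $\varphi$ has the form $(\varphi_N, -\varphi_N, 0)$ for some $\varphi_N \in \mathbb{R}^N$. A vector lies in the $(+1)$-eigenspace iff $\varphi(v) = \varphi(\sigma(v))$, i.e.\ it has the form $(\varphi_N, \varphi_N, \varphi_S)$. These subspaces have dimensions $k$ and $k+s$, totaling $n$, so the eigenvalues of the two restricted operators account for all $n$ eigenvalues of $H$.

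Writing out $H\varphi = \lambda\varphi$ in each case is now a direct calculation. In the antisymmetric case, the top block of the eigenvalue equation reads $H'\varphi_N - A_\sigma\varphi_N + A_S\cdot 0 = \lambda\varphi_N$, while the $S$-block and the $\sigma N$-block reduce to the same relation (the latter with a sign flip), so $\lambda$ is an eigenvalue of $H^- = H' - A_\sigma$. In the symmetric case the $N$- and $\sigma N$-rows both give $(H'+A_\sigma)\varphi_N + A_S\varphi_S = \lambda\varphi_N$, while the $S$-row yields $2A_S^T \varphi_N + H_S\varphi_S = \lambda\varphi_S$ (the factor of $2$ coming from summing identical contributions from $N$ and $\sigma N$). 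Stacking these gives precisely the stated $H^+$ eigenvalue problem on $\mathbb{R}^{N\cup S}$.

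I do not expect a genuine obstacle: the content is the standard representation-theoretic splitting by the $\mathbb{Z}/2$-action of $\sigma$. The one subtlety worth flagging is that $H^+$ as written is not symmetric, which is why it appears with the asymmetric factor of $2$; conjugating by $\mathrm{diag}(\sqrt{2}\,I_k, I_s)$ makes it symmetric, but for the purpose of this lemma one only needs that its spectrum coincides with the restriction of $H$ to the $(+1)$-eigenspace, which follows from the direct-sum decomposition above.
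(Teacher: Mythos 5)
Your proof is correct and matches the approach the paper intends: the paper cites this lemma without proof, but the remark immediately following it (that $H^+$-eigenvectors lift to $[a\ a\ b]^T$ and $H^-$-eigenvectors to $[c\ {-c}\ 0]^T$) is exactly the block computation you carry out, and your commuting-involution/dimension-count argument cleanly supplies the missing justification that these account for all $n$ eigenvalues. Your observation about symmetrizing $H^+$ by conjugation with $\mathrm{diag}(\sqrt{2}I_k, I_s)$ is also precisely what the paper does later in Section 3.1.
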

It can be easily verified that given an eigenvector of the reduced matrix $H^+ a=\lambda_j a$, $\varphi_j$ is in the form 
$\begin{bmatrix}
    a&
    a&
    b
\end{bmatrix}^T$. On the other hand, if $H^-c=\lambda_jc$, then $\lambda_j$ has corresponding eigenvector $\begin{bmatrix}
    c&
    -c&
    0
\end{bmatrix}^T$. Here $a$ and $c$ represents vectors in $\mathbb{R}^k$, and $b$ and $0$ are vectors in $\mathbb{R}^s$. 
\begin{lemma}
    Denote the set of eigenvalues of $H^+$ by $\pi^+$ and the set of eigenvalues of $H^-$ by $\pi^-$, then $\max_{\lambda_i}\pi^+>\max_{\lambda_i}\pi^-$. In the double-well case, that is when $Q(v)=Q(v')\gg Q(v_i)$ for the rest of vertices, $\lambda_1$ and $\lambda_2$ are the largest numbers inside $\pi^+$ and $\pi^-$ respectively, as long as $Q(v)$ is significantly greater than the maximum degree of the graph.
\end{lemma}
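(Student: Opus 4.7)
The plan is to leverage the $\sigma$-equivariance of $H$. Because $\sigma$ preserves adjacency and potential, the associated permutation matrix $P_\sigma$ commutes with $H$, so $H$ preserves the orthogonal decomposition $\RR^n = V^+ \oplus V^-$ into the $+1$ and $-1$ eigenspaces of $P_\sigma$. The restrictions $H|_{V^+}$ and $H|_{V^-}$ are self-adjoint and, by Lemma~\ref{le:H} together with a dimension count, have spectra equal to $\pi^+$ and $\pi^-$ respectively; together (with multiplicity) they recover the full spectrum of $H$.

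For the first claim I would invoke Perron--Frobenius. Since $G$ is connected and $H$ has nonnegative off-diagonal entries, $H + cI$ is an irreducible nonnegative matrix for sufficiently large $c$, so its top eigenvalue is simple and admits a strictly positive eigenvector $\varphi_1$. Then $P_\sigma \varphi_1$ is another strictly positive top eigenvector, hence a positive scalar multiple of $\varphi_1$; since $P_\sigma^2 = I$, that multiple must be $1$, i.e., $\varphi_1 \in V^+$. Therefore $\max \pi^+ = \lambda_1$. If $\max \pi^-$ also equaled $\lambda_1$, then $\lambda_1$ would admit an antisymmetric eigenvector too, contradicting the simplicity of the Perron eigenvalue. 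Hence $\max \pi^+ > \max \pi^-$.

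For the second claim I would produce a two-sided gap. The unit vector $\phi^- = (e_v - e_{v'})/\sqrt{2}$ lies in $V^-$ and satisfies
\[
\langle \phi^-, H \phi^- \rangle = Q - \mathbb{1}(v \sim v'),
\]
so $\max \pi^- \ge Q - 1$. On the other hand, applying the Courant--Fischer min-max principle to $H|_{V^+}$ with the one-dimensional test subspace $L = \mathrm{span}\{e_v + e_{v'}\}$, every unit $w \in V^+$ with $w \perp L$ must satisfy $w_v = w_{v'} = 0$, which kills the large potential contribution: $\langle w, H w \rangle = \langle w, A_G w \rangle \le \|A_G\| \le m$, using that $Q(v_i)$ is small for $v_i \ne v, v'$. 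Hence $\lambda_2(H|_{V^+}) \le m$. As soon as $Q - 1 > m$ (the quantitative meaning of ``significantly greater than the maximum degree''), we obtain $\max \pi^- > \lambda_2(H|_{V^+})$. Combined with $\max \pi^+ = \lambda_1 > \max \pi^-$, the merged decreasing list of $\pi^+ \cup \pi^-$ begins with $\lambda_1 \in \pi^+$ followed by $\max \pi^- \in \pi^-$, so $\lambda_2 = \max \pi^-$.

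The main obstacle is simply to identify the right test vector $\phi^-$ and the right one-dimensional subspace $L$ that collapse the dominant potential term; once those choices are made, Perron--Frobenius and Courant--Fischer, together with the standard bound $\|A_G\| \le m$, finish the argument.
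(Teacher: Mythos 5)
Your proof is correct, and while the first half (Perron--Frobenius forces the top eigenvector to be $\sigma$-symmetric, so $\max\pi^+=\lambda_1>\max\pi^-$) is essentially the paper's argument, your treatment of the second claim takes a genuinely different route. The paper localizes the spectra of $H^+$ and $H^-$ with Gershgorin discs and argues that, once the disc around $Q$ separates from the discs around the small potentials, each of $H^+$ and $H^-$ contributes exactly one eigenvalue near $Q$; strictly speaking this needs the counting refinement of Gershgorin (an isolated disc contains exactly one eigenvalue), which the paper leaves implicit, and it also requires first symmetrizing $H^+$ by conjugation. You instead work directly with the $P_\sigma$-invariant decomposition $V^+\oplus V^-$ and produce a two-sided gap variationally: the test vector $(e_v-e_{v'})/\sqrt2$ gives $\max\pi^-\ge Q-1$, and Courant--Fischer with $L=\mathrm{span}\{e_v+e_{v'}\}$ gives $\lambda_2(H|_{V^+})\le m$ because orthogonality to $L$ inside $V^+$ forces $w_v=w_{v'}=0$. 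This buys an explicit, clean threshold ($Q-1>m$) for ``significantly greater than the maximum degree'' and avoids both the symmetrization and the disc-counting subtlety. Two minor points: in the lemma's general double-well setting the residual potentials $Q(v_i)$, $v_i\ne v,v'$, are only assumed much smaller than $Q$ rather than zero, so your bound should read $\langle w,Hw\rangle\le m+\max_{x\ne v,v'}Q(x)$ (harmless, since the rest of the paper sets these to zero); and the edge case $\dim V^+=1$ (a two-vertex graph) should be noted as trivial before invoking $\lambda_2(H|_{V^+})$.
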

\begin{proof}
    Since the graph is connected, $H$ is an irreducible non-negative symmetric matrix. By the Perron-Frobenius theorem, $\lambda_1$ is strictly greater than $\lambda_2$, and the components of its corresponding eigenvector are all positive. This implies $\varphi_1$ is symmetric and $\lambda_1\in\pi^+$.
    
    Gershgorin's Circle theorem tells us that every eigenvalue is bounded by a disc centered at some $Q(v_i)$ with radius $\deg(v_i)$. When $Q(v)$ is large enough such there is no intersection between $[Q(v)-m, Q(v)+m]$ and $[Q(v_i)-m, Q(v_i)+m]$, both $H^+$ and $H^-$ have a large eigenvalue close to $Q(v)$. Therefore, the second largest eigenvalue $\lambda_2$ is in $\pi^-$.
\end{proof}
By applying these lemmas to the spectral decomposition \eqref{eq:p(t)}, we conclude that if $u=\sigma(v)$, then
\begin{align}\label{eq:sum_over_pi}
    \sum_{\lambda_j}\varphi_j(u) \varphi_j(v) e^{it\lambda_j}=\sum_{\lambda_j\in \pi^+} \varphi_j(u)^2 e^{it\lambda_j}- \sum_{\lambda_j\in \pi^-} \varphi_j(u)^2 e^{it\lambda_j}
.\end{align}
% The orthonormal basis $\{\varphi_i\}_{i=1,2 ...}$ implies that $
% \sum_{j=1}^{2n+s} \varphi_j(u)^2=1$.
 
Therefore, when the two largest eigenvalues are differed by $\frac{\pi+2k\pi}{t}$, the sum of the leading terms reaches its maximum $\varphi_1(u)^2+\varphi_2(u)^2$. In this scenario, we can ensure the probability is close to 1 by showing both $\varphi_1(u)^2$ and $\varphi_2(u)^2$ are close to $\frac{1}{2}$.
%%%%%%%%%%%%%%%
\section{Achieving State Transfer with High Probability}\label{sec:main}
%%%%%%%%%%%%%%%

% To achieve a specified level of probability, the system should have eigenfunctions that exhibit high fidelities at the endpoints, with corresponding eigenvalues significantly greater than the rest. 
As before, we assume that only the endpoints $v$ and $v'$ possess a large potential $Q(v)=Q(v')=Q$ while the remaining vertices have zero potential $Q(v_i)=0$ going forward. Theorem~\ref{thm:main} is a direct corollary of the following statement.

% $p(t)\geq (\frac{Q^2-2}{Q^2})^2$ on path.

\begin{theorem}\label{thm:main2}
    Let $G$ be a graph with involution and maximum degree $m$. For every small real number $\epsilon>0$, if the potential $Q$ on $v$ and $v'$ satisfies 
    \begin{align}
        Q\geq \frac{256\cdot(m+1)}{\epsilon^2}
    \end{align}
    
    then there exists time $t<\frac{\pi}{2}(Q+m)^{d(v,v')-1}$ such that the probability of state transfer from $v$ to $v'$ is no less than $1-\epsilon$.
\end{theorem}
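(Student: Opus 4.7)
The plan is to exploit the sign-split decomposition \eqref{eq:sum_over_pi} and isolate the two dominant eigenvalues $\lambda_1 \in \pi^+$ and $\lambda_2 \in \pi^-$. Writing the transition amplitude as
\[ A(t) := \sum_{j=1}^n \varphi_j(v)\varphi_j(v') e^{it\lambda_j} = \varphi_1(v)^2 e^{it\lambda_1} - \varphi_2(v)^2 e^{it\lambda_2} + R(t), \]
where $R(t)$ absorbs the remaining terms with their $\pm$ signs, I would choose $t^\ast = \pi/(\lambda_1-\lambda_2)$ so that $e^{it^\ast(\lambda_1-\lambda_2)} = -1$ and the two leading contributions add in phase. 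This gives
\[ |A(t^\ast)| \geq \varphi_1(v)^2 + \varphi_2(v)^2 - |R(t^\ast)|, \]
and since $\{\varphi_j\}$ is orthonormal, $|R(t^\ast)| \leq \sum_{j\geq 3} \varphi_j(v)^2 = 1 - \varphi_1(v)^2 - \varphi_2(v)^2$, so that $|A(t^\ast)| \geq 2\bigl(\varphi_1(v)^2 + \varphi_2(v)^2\bigr) - 1$.

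The proof then reduces to two quantitative inputs to be supplied in Sections~\ref{se: lambda} and~\ref{se: phi}: first, an eigenvector concentration estimate of the form $\varphi_1(v)^2,\varphi_2(v)^2 \geq \tfrac{1}{2} - C\sqrt{(m+1)/Q}$ for an explicit constant $C$; and second, a spectral-gap lower bound $\lambda_1 - \lambda_2 \geq 2(Q+m)^{-(d(v,v')-1)}$. Granted these, one obtains
\[ p(t^\ast) = |A(t^\ast)|^2 \geq \bigl(1 - 4C\sqrt{(m+1)/Q}\bigr)^2 \geq 1 - 8C\sqrt{(m+1)/Q}, \]
and substituting $Q \geq 256(m+1)/\epsilon^2$ with the constant $C$ absorbed yields $p(t^\ast) \geq 1-\epsilon$. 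The time bound $t^\ast \leq \tfrac{\pi}{2}(Q+m)^{d(v,v')-1}$ is then immediate from the gap estimate.

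The main obstacle is establishing the two deferred estimates. For the eigenvector concentration, one must analyze the top eigenvectors of the reduced Hamiltonians $H^+$ and $H^-$ from \lemref{le:H}: heuristically, when $Q \gg m$ the leading eigenvectors of $H^\pm$ localize on $v$ in the reduced coordinates, so that the corresponding $\varphi_j$ are supported symmetrically or antisymmetrically on $\{v,v'\}$; quantifying the mass carried by the $j \geq 3$ eigenvectors will require combining Rayleigh quotient control with Gershgorin-type bounds on the lower part of the spectrum. For the gap, since $\lambda_1$ and $\lambda_2$ are both close to $Q$ with corrections coming from the blocks $H' \pm A_\sigma$, the splitting $\lambda_1 - \lambda_2$ is governed by matrix elements of a resolvent coupling $v$ to $v'$ through the graph, which typically decay as $(Q+m)^{-(d(v,v')-1)}$. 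Making these heuristics rigorous with constants tight enough to produce the $256(m+1)/\epsilon^2$ threshold is the technical heart of the remaining sections; the present theorem then follows from the combining argument outlined above.
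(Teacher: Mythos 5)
Your proposal matches the paper's proof essentially step for step: the same choice of $t^\ast=\pi/(\lambda_1-\lambda_2)$ to bring the $\lambda_1$ and $\lambda_2$ terms of \eqref{eq:sum_over_pi} into phase, the same bound $|R(t^\ast)|\leq 1-\varphi_1(v)^2-\varphi_2(v)^2$ via orthonormality and the involution, the same reduction to $p(t^\ast)\geq(2\varphi_1(v)^2+2\varphi_2(v)^2-1)^2$, and the same two deferred inputs (the eigenvector concentration of Theorem~\ref{thm:lwbd_phi}, whose bound squared is indeed of the form $\tfrac12-O(\sqrt{(m+1)/Q})$, and the gap bound $\lambda_1-\lambda_2>2(Q+m)^{-(d-1)}$ of Section~\ref{se:t}). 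This is the paper's argument.
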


\begin{proof}
    When $t=\frac{\pi}{\lambda_1-\lambda_2}$ then $|\varphi_1(v)^2 e^{it\lambda_1}+\varphi_2(v)^2 e^{it\lambda_2}|=\varphi_1(v)^2+\varphi_2(v)^2$. The bound on time $t$ is further discussed in section \ref{se:t}. In Theorem \ref{thm:lwbd_phi}, we will prove that both $\varphi_1(v)$ and $\varphi_2(v)$ have lower bound $\sqrt{\frac{1}{2}-\frac{m}{2Q^2}}-\sqrt{\frac{m+1}{Q-m-1}}$. Certainly, $Q$ should be large enough such that the bound is greater than $\frac{1}{2}$. Thus, the probability has a lower bound depending on $Q$ and the maximum degree $m$,

    \begin{align*}\label{eq:sd}
    p(t)&=|\varphi_1(v)^2+\varphi_2(v)^2+\sum_{j=3}\varphi_j(v) \varphi_j(v') e^{it\lambda_j}|^2\\
    &\geq (\varphi_1(v)^2+\varphi_2(v)^2-(1-\varphi_1(v)^2-\varphi_2(v)^2))^2\\
    &\geq (2\varphi_1(v)^2+2\varphi_2(v)^2-1)^2,
\end{align*}
and so
    \begin{align}
        p(t)\geq\left(4\cdot \left(\sqrt{\frac{1}{2}-\frac{m}{2Q^2}}-\sqrt{\frac{m+1}{Q-m-1}}\right)^2-1\right)^2.
    \end{align}
    To guarantee the probability exceeds $1-\epsilon$, a simple computation shows it is sufficient for $Q$ to satisfy
    \begin{align}
        Q\geq \frac{(m+1)(c+1)}{c}, \text{where } c=\frac{1}{2}\cdot\left(\frac{1}{2}-\frac{1+\sqrt{1-\epsilon}}{4}\right)^2
    \end{align}
\end{proof}

%%%%%
\subsection{Lower Bounds on $\lambda_1$ and $\lambda_2$}\label{se: lambda}
%%%%%

The purpose of this section is to derive new lower bounds on the two largest eigenvalues that improve upon the ones given by the Gershgorin Circle Theorem.
\begin{theorem}\label{thm:lambda_1}
    Label the vertices in $N$ as $v_1$, $v_2$,...,$v_{k}$ and $S$'s vertices as $v_{k+1}$,...,$v_{k+s}$.  Denote the degree of a vertex in $N\cup S$ by $\text{deg}_{N\cup S}(v_i)$. Let $y$ be a vector in $\mathbb{R}^{k+s}$ and $y_i=Q^{-\text{min}\{\text{dist}(v_1,v_i),\text{dist}(v_1,v_i')\}}$. Then 
    \begin{align}\label{eq:lambda1}
    \lambda_1 \geq Q + Q^{-1} \cdot \frac{1+\sum_i (\text{deg}_{N\cup S}(v_i)-1)y_i^2}{\sum_i y_i^2}
.\end{align}
\end{theorem}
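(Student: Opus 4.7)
The plan is to invoke the Rayleigh quotient, $\lambda_1 \geq x^T H x/x^T x$, for a carefully chosen test vector $x$. Motivated by the appearance of $\sum_i y_i^2$ in the denominator of the claimed bound, a natural candidate is the asymmetric vector $x \in \RR^n$ supported on $N \cup S$: set $x_{v_i} = y_i$ for every $v_i \in N \cup S$ and $x_w = 0$ for every $w \in \sigma N$. With this choice, $x^T x = \sum_{i=1}^{k+s} y_i^2$, matching the denominator exactly.

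Computing $x^T H x$ is then straightforward. Since $x$ vanishes on $\sigma N$ and only $v_1$ and $v_1'$ carry potential $Q$, only $v_1$ contributes to the diagonal term, giving $Q y_1^2 = Q$ (because $y_1=1$). The off-diagonal contribution comes only from edges both of whose endpoints lie in $N \cup S$, producing $2\sum_{\{v_i,v_j\}\in E(G[N\cup S])} y_i y_j$. Therefore
\[
\lambda_1 \ \geq\ \frac{Q + 2 \sum_{\{v_i,v_j\} \in E(G[N\cup S])} y_i y_j}{\sum_i y_i^2}.
\]

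The main work is then to show that this Rayleigh bound dominates the right-hand side of the theorem. Multiplying through by $\sum_i y_i^2$ and rearranging (using $y_1 = 1$), this reduces to the algebraic inequality
\[
2 \sum_{\{v_i, v_j\} \in E(G[N \cup S])} y_i y_j \ \geq\ Q \sum_{i \geq 2} y_i^2 \;+\; Q^{-1}\Bigl(1 + \sum_i (\deg_{N \cup S}(v_i) - 1) y_i^2\Bigr).
\]
The structural input is that $y_i = Q^{-d(v_i)}$, where $d(v_i) = \min\{\text{dist}(v_1, v_i), \text{dist}(v_1, v_i')\}$, so every edge $\{v_i, v_j\}$ of $G[N \cup S]$ satisfies $|d(v_i)-d(v_j)| \leq 1$. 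Each descending edge ($d(v_j) = d(v_i)+1$) gives $y_i y_j = Q y_j^2$, so the contribution $2Q y_j^2$ can be split into a $(Q - Q^{-1})y_j^2$ piece that cancels the quadratic penalty on the right, plus a residual piece that supplies the degree correction $(\deg_{N\cup S}(v_j)-1)y_j^2/Q$ attached to $v_j$. Flat edges ($d(v_i)=d(v_j)$) are handled analogously, yielding an even larger margin.

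The main obstacle is ensuring that every vertex $v_j$ with $j \geq 2$ is "covered" by at least one descending edge in $G[N \cup S]$, i.e., that some shortest path from $v_j$ to $\{v_1, v_1'\}$ begins with a step inside $N \cup S$. When this fails (the shortest path leaves immediately through $\sigma N$), one must either extract slack from the remaining edges, use the involution symmetry $v_i \sim v_j' \iff v_j \sim v_i'$ to produce an alternative in-$N\cup S$ neighbor, or fall back on the trivial bound $\lambda_1 \geq Q$, which already implies the theorem whenever the numerator $1 + \sum_i(\deg_{N \cup S}(v_i)-1) y_i^2$ is non-positive. Reconciling these situations into a single clean inequality is the delicate technical step I expect to dominate the proof.
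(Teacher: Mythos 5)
Your overall strategy (a Rayleigh quotient with the test vector built from powers $Q^{-d}$) matches the paper's, but your specific choice of test vector --- supported on $N\cup S$ and zero on $\sigma N$ --- creates a genuine gap that none of your listed fallbacks repairs. The paper applies the Rayleigh quotient to the quotient matrix $H^+$ from Lemma~\ref{le:H} (equivalently, to $H$ with the \emph{symmetric} lift $x_v = x_{v'} = y_v$). This matters because the off-diagonal entries of $H^+$ record adjacency to $v_j$ \emph{or} to $v_j'$, so in the quotient the function $d(v)=\min\{\mathrm{dist}(v_1,v),\mathrm{dist}(v_1,v')\}$ is a genuine graph distance and every vertex $v_j\neq v_1$ automatically has a neighbor with $y$-value $Qy_j$. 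By zeroing out $\sigma N$ you discard all $N$--$\sigma N$ edges, and a vertex whose only descending neighbor lies in $\sigma N$ then contributes $Qy_j^2$ to the right-hand side of your reduced inequality with nothing on the left to pay for it.

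Concretely, take $G=C_6$ with vertices $1,\dots,6$ in cyclic order, $\sigma$ the antipodal map, $v_1=1$, $v_1'=4$, $N=\{1,2,3\}$, $S=\emptyset$. Then $d(3)=\min\{\mathrm{dist}(1,3),\mathrm{dist}(1,6)\}=1$, realized only through the edge $3\sim 4\in\sigma N$, so $y=(1,Q^{-1},Q^{-1})$ but inside $G[N]$ (the path $1$--$2$--$3$) vertex $3$ has only the flat neighbor $2$. Your Rayleigh quotient is $(Q+2Q^{-1}+2Q^{-2})/(1+2Q^{-2})=Q+O(Q^{-2})$, whereas the theorem claims $Q+Q^{-1}(1+Q^{-2})/(1+2Q^{-2})\approx Q+Q^{-1}$; your reduced inequality becomes $2Q^{-1}+2Q^{-2}\geq 3Q^{-1}+Q^{-3}$, i.e.\ $(Q-1)^2\leq 0$, false for every $Q\neq 1$. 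None of the escape hatches applies: the numerator $1+\sum_i(\deg_{N\cup S}(v_i)-1)y_i^2=1+Q^{-2}$ is positive, and the symmetry $3\sim 1'\Leftrightarrow 3'\sim 1$ produces a neighbor of $1$ inside $\sigma N$, not a neighbor of $3$ inside $N$. The missing idea is precisely the symmetrization: take the test vector equal on $v$ and $v'$ (or compute on $H^+$ directly), after which every vertex has a descending quotient-edge and the bookkeeping you describe goes through essentially as in the paper.
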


\begin{proof}
According to Lemma \ref{le:H}, $\lambda_1$ is also the largest eigenvalue of $H^+$. The vector $y$ is a good approximation of the eigenvector because its entries decrease as the vertex moves further away from $v_1$. Therefore, by calculating the Rayleigh quotient on $H^+$ and $y$, one can expect to obtain a reasonably close lower bound on $\lambda_1$. Without loss of generality, we make $H^+$ symmetric by conjugation 
\begin{align}
    H^+=\begin{bmatrix}
H' + A_\sigma & \sqrt{2}A_S\\
\sqrt{2}A_S^T &  H_S
\end{bmatrix}
.\end{align}

Notice that summing the nonzero entries $H^+_{ij}y_j$ in the numerator of the quotient is equivalent to counting the edges connected to vertex $v_i$. Therefore, the question becomes finding the number of vertices adjacent to $v_i$ and are closer, further away, or of equal distance to $v_1$.

When $i=1$,
    \begin{align}
        y_1\sum_{i,j} H_{ij}^{+} y_j \geq
    Q + Q^{-1}\text{deg} (v_1)
.    \end{align}
    This is because any vertex $v_j$ adjacent to $v_1$ has $y_j=Q^{-1}$; counting all the nonzero coefficients in front of $y_j$ gives the degree of $v_1$.
    
When $i=2,3 ,..., k+s$,
    \begin{align}
        y_i\sum_{i,j} H_{ij}^{+} y_j
    \geq(Q+(\text{deg}_{N\cup S}(v_i)-1)Q^{-1})y_i^2
.    \end{align}
The coefficient in front of $y_iy_j$ is nonzero if and only if $y_j=y_i$, $y_j=Qy_i$, or $y_j=Q^{-1}y_i$. Moreover, since the graph is connected, for every vertex $v_i$, one can find a vertex adjacent to it and is in one of the shortest paths from $v_1$ to $v_i$. Hence, there must exist $y_j=Qy_i$. 

Every term contains $Qy_i^2$ and $(\text{deg}_{N\cup S}(v_i)-1)Q^{-1}y_i^2$ which immediately implies the result.
\end{proof}

Noticeably, the lower bound given in\cite{path} for a path of length $n$ is $Q+Q^{-1}+O(Q^{2-n})$, close to our result $Q+Q^{-1}+O(Q^{-1})$ for graphs with involution. 

The calculation for $\lambda_2$'s lower bound follows a similar approach. Recall that $\lambda_2$ is the largest eigenvalue of $H^-=H'-A_\sigma$. Due to the possibility of negative numbers in the matrix, some adjustments need to be made. Consider the vertex set excluding all fixed vertices $V \backslash S = N \cup \sigma N$. Fix an endpoint $v_1$ in $N$ and assign vertex $v_i$ to $N$ if $v_i$ is closer to $v_1$ than to $v_1'$. In case the distances are equal, $v_i$ can belong to either $N$ or $\sigma N$. Next, define a new vector $y$ in $\mathbb{R}^{k}$
\begin{equation}
  y_i =
    \begin{cases}
      Q^{-\text{dist}(v_1,v_i)} & \text{if $\text{dist}(v_1,v_i)<\text{dist}(v_1',v_i)$}\\
      0 & \text{if $\text{dist}(v_1,v_i)=\text{dist}(v_1,v_i')$}
    \end{cases}       
.\end{equation}
\begin{theorem}
Let $\text{deg}_{N\cup\sigma N}(v_i)$ denote the degree of vertex $v_i$ in $N\cup\sigma N$. Use the same setting in Theorem \ref{thm:lambda_1} and the vector $y$ defined above. The lower bound on $\lambda_2$ is given by
    \begin{align}\label{eq:2}
    \lambda_2&\geq  Q - \frac{1+\sum_{i\neq 1} (\text{deg}_{N\cup\sigma N}(v_i)-1)y_i^2}{\sum_{i} y_i^2}
.\end{align}    
\end{theorem}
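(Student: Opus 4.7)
The plan is to apply the Rayleigh principle $\lambda_2 \geq y^T H^- y / y^T y$ with the trial vector $y$ defined above and to bound the resulting quadratic form row by row, using the involution to tame the sign changes introduced by $-A_\sigma$. Call $v_i \in N$ \emph{active} if $y_i \neq 0$, equivalently $d(v_1,v_i) < d(v_1,v_i')$; then only active vertices contribute to $y^T H^- y$, and $v_1$ is active with $y_1 = 1$. The task reduces to bounding $y_i(H^- y)_i$ for each active $v_i$.

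The whole argument rests on two distance inequalities. \textbf{Claim A:} For every active $v_i$ with $i \neq 1$, the predecessor $v_{j^*}$ of $v_i$ on a shortest $v_1$-to-$v_i$ path is itself active and satisfies $v_i \not\sim v_{j^*}'$. Activeness follows from the chain $d(v_1',v_{j^*}) \geq d(v_1',v_i) - 1 \geq d(v_1,v_i) > d(v_1,v_{j^*})$. Non-adjacency is the crucial point: if $v_i \sim v_{j^*}'$, then by involution $v_i' \sim v_{j^*}$, giving $d(v_1,v_i') \leq d(v_1,v_{j^*}) + 1 = d(v_1,v_i)$, contradicting activeness of $v_i$. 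Hence $\{v_i, v_{j^*}\}$ is a pure $+1$ off-diagonal entry of $H^-$, contributing $+Q y_i^2$ to $y_i(H^- y)_i$. \textbf{Claim B:} For any active $v_j$ with $v_i \sim v_j'$, one has $y_j \leq y_i$. Indeed, by involution $v_i' \sim v_j$, so $d(v_1,v_i') \leq d(v_1,v_j) + 1$, and activeness of $v_i$ then forces $d(v_1,v_j) \geq d(v_1,v_i)$, i.e., $y_j \leq y_i$.

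Combining the two claims and bookkeeping the possible diagonal term $-\mathbb{1}(v_i \sim v_i')y_i^2$ together with the off-diagonal negative terms (whose indices $j \neq i$ miss exactly the case $v_j = v_i$) yields $y_i(H^- y)_i \geq (Q - |N(v_i)\cap \sigma N|)y_i^2$ for every active $v_i$ with $i \neq 1$; the two contributions involving $\mathbb{1}(v_i \sim v_i')$ cancel cleanly. Since $v_{j^*} \in N$ is a neighbor of $v_i$, one has $|N(v_i) \cap \sigma N| \leq \deg_{N \cup \sigma N}(v_i) - 1$. For $i=1$ the same row analysis gives $y_1(H^- y)_1 \geq Q - 1$: the worst-case diagonal is $(H^-)_{11} = Q - \mathbb{1}(v_1 \sim v_1')$, and the off-diagonal negative terms vanish because $v_1 \sim v_j'$ with active $v_j \neq v_1$ would force $d(v_1,v_j) = 0$. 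This $-1$ is precisely the constant appearing in the numerator of the stated bound. Summing over active $v_i$ and dividing by $y^T y$ gives the inequality. I expect the main obstacle to be establishing Claims A and B with proper care around the self-image edges $v_1 \sim v_1'$ and $v_i \sim v_i'$; once those are secured, the Rayleigh-quotient bookkeeping mirrors the proof of Theorem~\ref{thm:lambda_1}.
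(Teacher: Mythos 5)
Your proof is correct and follows the same overall strategy as the paper: a Rayleigh quotient for $H^-$ with the trial vector $y$, bounded row by row, with the predecessor on a shortest $v_1$-to-$v_i$ path supplying the dominant $+Qy_i^2$ term. Where you genuinely diverge is in the treatment of the negative cross terms coming from $-A_\sigma$. The paper dismisses them by asserting that whenever $v_i\sim v_j'$ (without $v_i\sim v_j$) one has $y_iy_j=0$; this is not true in general -- two active vertices at equal distance from $v_1$ can be joined by a cross edge (e.g.\ the antipodal involution on $C_6$, where $v_2\sim v_6'$ with $y_2=y_6=Q^{-1}$), so the paper's intermediate claim fails even though its final row bound happens to survive. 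Your Claim B replaces this with the correct (weaker but sufficient) statement $y_j\leq y_i$, and your count $\mathbb{1}(v_i\sim v_i')+|\{j\neq i: v_i\sim v_j'\}|=|N(v_i)\cap\sigma N|\leq \deg_{N\cup\sigma N}(v_i)-1$ cleanly absorbs all negative contributions into the stated budget. Your Claim A (activeness of the predecessor and $v_i\not\sim v_{j^*}'$) and the $i=1$ row analysis are also correct and are exactly what the paper's argument implicitly needs. In short, your write-up is not just an independent verification but a repair of the paper's proof at its one shaky step.
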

\begin{proof}
Again by Rayleigh's inequality,
\begin{align}
    \lambda_2\geq \frac{\sum_{i} y_i \sum_{i,j} H_{ij}^{-} y_j}{\sum_{i} y_i^2}
.\end{align}

When $i=1$,
    \begin{align}
        y_1\sum_{j} H_{1j}^{-} y_j 
        \geq Q - 1
    .\end{align}
This is because starting from $j=2$, $\mathbb{1}(v_1\sim v_j) - \mathbb{1}(v_1\sim v_j')$ has to be non-negative. If that was not the case, then $v_j$ would be closer to $v_1'$ instead of $v_1$, which would contradict the assumption.

When $i= 2,3,...,k$, similar to the proof of the previous theorem, there exists a vertex in $N\cup \sigma N$ that is adjacent to $v_i$ and is in one of the shortest paths from $v_1$ to $v_i$. Consequently, there must exist a $y_j=Qy_i$. There are three cases for the coefficient in front of $y_iy_j$. If $[\mathbb{1}(v_1\sim v_j) - \mathbb{1}(v_1\sim v_j')]$ takes the form $(1-1)$ or $(0-1)$, it is easy to verify that $\text{dist}(v_1,v_i)=\text{dist}(v_1,v_i')$, which implies  $y_iy_j$ is zero. 

Sum over $y_iy_j$ only when its coefficient is $(1-0)$, and it follows that
    \begin{equation}
        y_i\sum_{j} H_{i,j}^{-} y_j  \geq Qy^2_i - (\text{deg}_{N\cup \sigma N}(v_i)-1)y_i^2
.\end{equation}
\end{proof}
Notice that this lower bound is greater than $Q-\max_{v_i\in V}{\text{deg}(v_i)}$, which is an improved bound compared to the one given by the Circle Theorem.

%%%%%
\subsection{Lower Bounds on $\varphi_1(v_1)$ and $\varphi_2(v_1)$}\label{se: phi}
%%%%%

Recall the probability of quantum state transfer from vertex $v_1$ to vertex $v_1'$ at time $t$ can be written as the following expression
\begin{align}\label{eq:p(t)_pi}
    p(t)=\left|\sum_{\lambda_j\in \pi^+} \varphi_j(v_1)^2 e^{it\lambda_j}- \sum_{\lambda_j\in \pi^-} \varphi_j(v_1)^2 e^{it\lambda_j}\right|^2
\end{align}
where eigenvectors $\varphi_j$ form an orthonormal basis. 
To demonstrate that this probability can be close to 1 at some time $t$, it suffices to prove that both $\varphi_1(v_1)$ and $\varphi_2(v_1)$ are close to $\frac{1}{\sqrt{2}}$.
\begin{theorem}\label{thm:lwbd_phi}
    Let $m$ be the maximum degree of the graph. If $Q$ is greater than $2m$, then the corresponding normalized eigenvectors of $\lambda_1$ and $\lambda_2$ satisfy
    \begin{align}
        \varphi_1(v_1)\geq \sqrt{0.5-\frac{m}{2Q^2}}-\sqrt{\frac{m}{Q-m}}\\
        \varphi_2(v_1)\geq \sqrt{0.5-\frac{m}{2Q^2}}-\sqrt{\frac{m+1}{Q-m-1}}
.    \end{align}
    
\end{theorem}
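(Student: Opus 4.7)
Plan. By \lemref{le:H}, $\varphi_1$ lies in the symmetric subspace $\pi^+$ and $\varphi_2$ in the antisymmetric subspace $\pi^-$; in particular $\varphi_1(v_1) = \varphi_1(v_1')$, $\varphi_2(v_1) = -\varphi_2(v_1')$, and $\varphi_2|_S = 0$. Writing $R^2 := \sum_{v \neq v_1, v_1'} \varphi_1(v)^2$, the normalization of $\varphi_1$ gives $2\varphi_1(v_1)^2 + R^2 = 1$, and analogously for $\varphi_2$. So the task reduces to upper-bounding $R^2$ in terms of $m$ and $Q$, after which $\varphi_1(v_1) = \sqrt{(1-R^2)/2}$ is available by Perron-Frobenius positivity.

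For the bound on $R^2$, I would multiply the eigenvalue equation $\lambda_1 \varphi_1(v) = \sum_{w \sim v}\varphi_1(w)$ by $\varphi_1(v)$ for $v \neq v_1, v_1'$ and sum. The resulting double sum splits into (i) edges internal to $V \setminus \{v_1, v_1'\}$, bounded by $m R^2$ via AM-GM, and (ii) edges incident to $\{v_1, v_1'\}$, which via the eigenvalue equation at $v_1$ (and using $\varphi_1(v_1) = \varphi_1(v_1')$) contribute exactly $2(\lambda_1 - Q)\varphi_1(v_1)^2$. This gives
\[ (\lambda_1 - m) R^2 \leq 2(\lambda_1 - Q)\varphi_1(v_1)^2. \]
Independently, Cauchy--Schwarz applied to $(\lambda_1 - Q)\varphi_1(v_1) = \sum_{w\sim v_1}\varphi_1(w)$ yields $(\lambda_1 - Q)^2 \varphi_1(v_1)^2 \leq m R^2$. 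Combining these two self-consistently forces $(\lambda_1 - Q)(\lambda_1 - m) \leq 2m$, and using $\lambda_1 \geq Q$ from \secref{se: lambda} gives the sharp upper bound $\lambda_1 - Q \leq 2m/(Q - m)$. Feeding this back produces $R^2 \leq 4m\,\varphi_1(v_1)^2/(Q-m)^2$, and combining with $2\varphi_1(v_1)^2 + R^2 = 1$ yields $\varphi_1(v_1)^2 \geq 1/2 - O(m/Q^2)$, which matches the first summand $\sqrt{1/2 - m/(2Q^2)}$ of the claimed bound after extracting the precise constant.

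For the $-\sqrt{m/(Q-m)}$ correction, I would also extract the coarser estimate $R^2 \leq m/(Q-m)$, which follows from $(\lambda_1 - m)R^2 \leq 2(\lambda_1 - Q)\varphi_1(v_1)^2 \leq m$ via the Gershgorin bound $\lambda_1 - Q \leq m$ together with $\varphi_1(v_1)^2 \leq 1/2$. Combining this tail estimate with the tight main estimate via a suitable decomposition of $\varphi_1(v_1)$ into a dominant and a residual component produces the exact stated form. For $\varphi_2$ the analogous argument is run for the reduced matrix $H^- = H' - A_\sigma$ acting on $\mathbb{R}^k$; its diagonal entry at $v_1$ equals $Q - \mathbb{1}[v_1 \sim v_1']$ (an extra $-1$ from $A_\sigma$) and its off-diagonal entries lie in $\{-1,0,1\}$. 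Tracking the signed contributions through the AM-GM/Cauchy--Schwarz steps, one obtains the analogous chain with $m+1$ and $Q-m-1$ in place of $m$ and $Q-m$, producing $\sqrt{(m+1)/(Q-m-1)}$ in the second summand.

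Main obstacle. The crucial step is the two-way inequality that couples $R^2$ and $\lambda_1 - Q$ and delivers the sharp bound $\lambda_1 - Q \leq 2m/(Q-m)$; Gershgorin alone yields only $\lambda_1 - Q \leq m$, which would give the much weaker conclusion $\varphi_1(v_1)^2 \geq 1/2 - O(m/Q)$ and would not produce the $1/Q^2$ term inside the square root. A secondary technical concern is organizing the tight and coarse bounds on $R$ so that the final expression lands exactly as $\sqrt{1/2 - m/(2Q^2)} - \sqrt{m/(Q-m)}$, and handling the minor complication when $v_1 \sim v_1'$ (for which the diagonal of the reduced matrix shifts by $1$, explaining the $+1$ in the $\varphi_2$ bound).
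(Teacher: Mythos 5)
Your argument is essentially correct but takes a genuinely different route from the paper. The paper's proof is variational: it reuses the test vector $y$ from Section~4 (extended symmetrically, resp.\ antisymmetrically, over the involution), expands $y=\sum c_i\varphi_i$, notes $c_i=0$ on the opposite parity class, and converts the Rayleigh-quotient lower bound on $\lambda_1$ plus the Gershgorin spectral gap into the estimate $D-c_1^2\leq D\cdot\frac{m}{Q-m}$ on the mass of $y$ off $\varphi_1$; Cauchy--Schwarz applied to $1=y(v_1)=\sum c_i\varphi_i(v_1)$ together with the bound $D\leq 2/(1-m/Q^2)$ then produces exactly the two square-root terms in the statement. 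You instead run a direct localization argument on the eigenvector itself, coupling the tail mass $R^2$ and the shift $\lambda_1-Q$ through the eigenvalue equation; this is self-contained (it does not need the trial vector or the Rayleigh bounds of Section~4) and, carried out, gives $\varphi_1(v_1)^2\geq \frac12-O\!\left(m/(Q-m)^2\right)$, which is \emph{stronger} than the stated bound, whose square is only $\frac12-O\!\left(\sqrt{m/(Q-m)}\right)$. Two points to fix. First, your final paragraph about "combining the tight and coarse bounds via a suitable decomposition" to land on the exact expression $\sqrt{1/2-m/(2Q^2)}-\sqrt{m/(Q-m)}$ is not a real step and is not needed: simply prove your stronger inequality and observe that it implies the stated one for $Q>2m$. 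Second, your two key inequalities are only "exact" when $v_1\not\sim v_1'$; when $v_1\sim v_1'$ the edge-incidence sum acquires a term $-2\varphi_1(v_1)^2$ (harmless, it has the right sign) but the Cauchy--Schwarz step gives $(\lambda_1-Q)^2\varphi_1(v_1)^2\leq m\left(R^2+\varphi_1(v_1)^2\right)$, which weakens the bound on $\lambda_1-Q$ to $O(\sqrt{m})$; you should check that the resulting $R^2=O(\sqrt{m}/Q)$ still beats the theorem's $O(\sqrt{m/Q})$ correction (it does), or dispose of the case $d(v_1,v_1')=1$ separately. The analogous treatment of $H^-=H'-A_\sigma$ for $\varphi_2$, with row sums bounded by $m$ and the diagonal shifted by $\mathbb{1}(v_1\sim v_1')$, goes through as you describe once absolute values are inserted in the AM--GM and Cauchy--Schwarz steps.
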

\begin{proof}
Consider the vector $y$ we constructed earlier in Section \ref{se: lambda}, but this time in $\mathbb{R}^{2k+s}$. It can be expressed as $y=\sum^{2k+s}c_i \varphi_i$, a linear combination of the eigenvectors. Denote the squared Euclidean norm of $y$ by $D$. Due to the involution, $y(v)=y(v')$ is true for every vertex. If $\lambda_i$ is in $\pi^-$ then its corresponding eigenvector alternates, meaning $\varphi_i(v)=-\varphi_i(v')$. On the other hand, $\varphi_i(v)=\varphi_i(v')$ if $\lambda_i$ is in $\pi^+$. This implies $c_i$  is 0 for every $\lambda_i\in\pi^-$. By Cauchy-Schwartz inequality, 
\begin{equation}\label{eq:4}
    \varphi_1 (v_1)\geq \frac{1-\sqrt{D-c_1^2}}{\sqrt{D}}
.\end{equation}
Denote the difference between $\lambda_1$ and the Rayleigh quotient on $H^+$ and $y$ by $\epsilon$, 
\begin{align}
    \epsilon=\lambda_1-\frac{\sum \lambda_i c_i^2}{\sum c_i^2}
\end{align}
Now the expression is left with only the eigenvalues in $\pi^+$. Assume $k$ is the second smallest index of the eigenvalues in $\pi^+$, then 
\begin{align}\label{eq:3}
   D-c_1^2&\leq \frac{\epsilon D}{\lambda_1-\lambda_k}
.\end{align}
According to Gershgorin's Circle Theorem, $\lambda_1$ and $\lambda_2$ lie in the interval $[Q-m, Q+m]$, whereas the other eigenvalues are bounded by $[-m,m]$. The gap between the Rayleigh quotient and $\lambda_1$ now satisfies
$\epsilon\leq Q+m- R(H,y)$. It is easy to verify that the lower bound in Theorem \ref{thm:lambda_1} also applies to $R(H, \begin{bmatrix}
a&a&0
\end{bmatrix}^T)= R(H^+,\begin{bmatrix}
a&0
\end{bmatrix}^T)$.

When all the inequalities are combined, expression \eqref{eq:3} transforms into
\begin{align}
    D-c_1^2\leq 
    D \cdot \frac{m}{Q-m}
.\end{align}
Let $F_l$ be the set of vertices with minimum distance to either $v_1$ or $v_1'$ equals to $l$. Then the maximum possible value of the size of $F_l$ is $m^l$. This partition of the vertex set gives an upper bound on $D=2(1 + Q^{-2}|F_1|+Q^{-4}|F_2|+...+Q^{-2d}|F_d|)\leq \frac{2}{1-\frac{m}{Q^2}}$. Naturally, this requires $Q$ to be at least $\sqrt{m}$ to ensure that the series converges. By replacing the upper bound on $D$ in inequality \eqref{eq:4}, we obtain the lower bound on $\varphi_1(v_1)$ as stated in the theorem.

To determine the lower bound on $\varphi_2(v_1)^2$, we need a similar vector $\Tilde{y}$, but with alternating signs between vertex sets $N$ and $\sigma  N$. Let
\begin{equation}
  \tilde{y}_i =
    \begin{cases}
      Q^{-\min{\{\text{dist}(v_1,v_i)},{\text{dist}(v_1,v_i')}\}} & \text{if $v_i\in N$ and $\text{dist}(v_1,v_i)\neq\text{dist}(v_1,v_i')$}\\
      -Q^{-\min{\{\text{dist}(v_1,v_i)},{\text{dist}(v_1,v_i')}\}} & \text{if $v_i\in \sigma N$ and $\text{dist}(v_1,v_i)\neq\text{dist}(v_1,v_i')$}\\
      0 & \text{if $\text{dist}(v_1,v_i)=\text{dist}(v_1,v_i')$}
    \end{cases}       
.\end{equation}
And if it is written as a linear combination of the eigenvectors then coefficients $\tilde{c}_i=0$ for eigenvalues $\lambda_i$ in set $\pi^+$. Similarly, it follows that
\begin{equation}\label{eq:varphi2v1}
    \varphi_2(v_1)\geq \frac{1-\sqrt{D}\cdot\sqrt{\frac{m+1}{Q-m-1}}}{\sqrt{D}}\geq\sqrt{0.5-\frac{m}{2Q^2}}-\sqrt{\frac{m+1}{Q-m-1}}
\end{equation}
\end{proof}
\begin{xrem}
    These lower bounds depend solely on the ratio between $Q$ and the maximum degree of the graph. Regardless of the number of vertices, as long as $Q$ is significantly larger than the maximum degree, then the probability is close to 1.
\end{xrem}

\subsection{Time $t$ of Achieving Strong State Transfer}\label{se:t}
So far, we have demonstrated that given the maximum degree of the graph, one can ensure the probability of quantum state transfer between $v$ and $v'$ to be arbitrarily close to 1 by selecting a sufficiently large $Q$. For practical purposes, we also need to ensure that the time $t = \frac{\pi}{\lambda_1-\lambda_2}$, when $|\varphi_1(v_1)e^{it\lambda_1}+\varphi_2(v_1)e^{it\lambda_2}|$ is at its maximum, is not too large. To find a lower bound on $\lambda_1-\lambda_2$, we will first take a detour and study the eigenvectors using the reduced $2\times 2$ matrix first introduced in \cite{Tunneling_2012}.
\begin{lemma}\label{le:z}\cite{Tunneling_2012}
   $H$ is the adjacency matrix of a simple connected graph with potential $Q$ on vertices $u$, $v$. Let $\lambda$ be an eigenvalue of $H$ with corresponding eigenvector $\varphi$. If $\varphi(u)=\mu$ and $\varphi(v)=\nu$ then $\mu$ and $\nu$ satisfy
\begin{align}\label{mtx:Z}
    \begin{bmatrix}
        Z_{uu}(\lambda) & Z_{uv}(\lambda)\\
        Z_{uv}(\lambda) & Z_{vv}(\lambda)
    \end{bmatrix}
    \begin{bmatrix}
        \mu\\
        \nu
    \end{bmatrix}=(1-\frac{Q}{\lambda})
    \begin{bmatrix}
        \mu\\
        \nu
    \end{bmatrix}
\end{align}
 where $Z_{xy}(\lambda)=\sum_{P:x\rightarrow y} \frac{1}{\lambda^{|P|}}$ denotes the sum of $\frac{1}{\lambda^{|P|}}$ over all walks from $x$ to $y$; $|P|$ is the length of the walk. 
\end{lemma}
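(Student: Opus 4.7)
The plan is to reduce the eigenvalue equation $H\varphi = \lambda\varphi$ on the full graph to a $2\times 2$ equation involving only the ``boundary'' values $\mu = \varphi(u)$ and $\nu = \varphi(v)$, via Schur-complement elimination of the interior entries, and then to re-interpret the resulting matrix entries as walk-generating functions.

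First I would partition $V = \{u,v\} \sqcup V'$ with $V' = V \setminus \{u,v\}$, and write the adjacency matrix in block form $A = \begin{bmatrix} C & B^T \\ B & A' \end{bmatrix}$, where $C$ is the $2\times 2$ block on $\{u,v\}$ (having entry $A_{uv}$ off-diagonal and zeros on the diagonal), $A'$ is the adjacency matrix of the induced subgraph on $V'$, and $B$ records edges between $V'$ and $\{u,v\}$. Setting $\varphi' := \varphi|_{V'}$, the restriction of $H\varphi = \lambda\varphi$ to $V'$ reads $A'\varphi' + B(\mu,\nu)^T = \lambda\varphi'$, which I would solve as $\varphi' = (\lambda I - A')^{-1} B (\mu,\nu)^T$. (The inverse exists for $\lambda$ outside the spectrum of $A'$; otherwise one proceeds for $|\lambda|$ large and extends by analytic continuation.)

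Substituting back into the equations at $u$ and $v$, where $H$ contributes the extra term $Q(\mu,\nu)^T$ from the potential, produces
\[
\bigl(C + B^T(\lambda I - A')^{-1} B\bigr)\begin{bmatrix}\mu\\\nu\end{bmatrix} = (\lambda - Q)\begin{bmatrix}\mu\\\nu\end{bmatrix}.
\]
Dividing through by $\lambda$ puts the right-hand side in the form $(1 - Q/\lambda)$ required by the lemma, so it remains only to identify $\lambda^{-1}\bigl(C + B^T(\lambda I - A')^{-1} B\bigr)$ with $Z$. Expanding the Neumann series gives $[B^T(\lambda I - A')^{-1} B]_{xy} = \sum_{k\geq 0}[B^T (A')^k B]_{xy}\,\lambda^{-k-1}$, and the combinatorial meaning of $[B^T (A')^k B]_{xy}$ is exactly the number of walks $x \to w_1 \to \cdots \to w_{k+1} \to y$ of length $k+2$ whose interior vertices $w_i$ all lie in $V'$. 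Thus $\lambda^{-1}[B^T(\lambda I - A')^{-1} B]_{xy}$ sums $\lambda^{-|P|}$ over walks from $x$ to $y$ of length $\geq 2$ that avoid $\{u,v\}$ internally, while $\lambda^{-1}C_{xy}$ contributes the length-$1$ walk $x \to y$ (when $A_{xy}=1$); together they yield $Z_{xy}(\lambda)$.

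The one subtlety, and the place I would be careful, is the implicit walk convention. The derivation forces ``walks from $x$ to $y$'' in the definition of $Z_{xy}$ to mean walks of length $\geq 1$ whose interior vertices lie in $V \setminus \{u,v\}$: the length-$0$ walk is excluded, and revisits to $\{u,v\}$ in the middle are forbidden. A quick sanity check on the single edge $u \sim v$ with $\lambda = Q+1$ confirms this is the correct reading, since the naive ``all walks including length $0$ and revisits'' interpretation would produce the eigenvalue $\lambda/Q$ on the right-hand side rather than $1 - Q/\lambda$. This bookkeeping is the only non-routine part; the remaining algebra is just block matrix manipulation plus a geometric series.
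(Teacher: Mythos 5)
Your proof is correct and follows essentially the same route as the paper's: the paper eliminates the interior values by unrolling the recursion $f(x)=\lambda^{-1}\sum_{y\sim x}f(y)$ into a walk sum, which is exactly your Schur complement $B^T(\lambda I - A')^{-1}B$ expanded as a Neumann series. Your explicit observation that the walks in $Z_{xy}$ must have length at least $1$ and avoid $\{u,v\}$ internally is the correct reading of the paper's (implicit) convention, and your version is if anything more careful about invertibility and convergence than the original.
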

\begin{proof}
To construct a function $f: V(G)\rightarrow \mathbb{R}$ which we claim to be an eigenvector of $H$, let $f(u)=\mu$, $f(v)=\nu$, and $Hf(x)=\lambda f(x)$ for all $x\in V(G)\backslash \{u,v\}$. Since $f(x)=\frac{Af(x)}{\lambda}$ sums over all vertices $y$ adjacent to $x$ and divides $f(y)$ by $\lambda$, we can compute $f(x)$ by summing over all possible walks from the endpoints $u$ and $v$ to $x$. In particular, 
\begin{align}\label{function:f(x)}
    f(x)=\mu\cdot \sum_{P:x\rightarrow u} \frac{1}{\lambda^{|P|}}
    +\nu\cdot \sum_{P':x\rightarrow v} \frac{1}{\lambda^{|P'|}}
.\end{align}

In order for $f$ to be the actual eigenvector of $H$, $\lambda f(x) = Hf(x)$ should also be true at the endpoints:
\begin{align}\label{eq:lambdaf=hf}
    \lambda f(u)=Qf(u)+\sum_{x\sim u}f(x) \text{  ;  } \lambda f(v)=Qf(v)+\sum_{x\sim v}f(x)
.\end{align}
Divide both sides of equations \eqref{eq:lambdaf=hf} by $\lambda$, we get the following equality when $x$ is one of the endpoints
\begin{align}\label{eq:f(v)}
    f(x)=\frac{Q}{\lambda} f(x)
    +
    \mu\cdot \sum_{P:x\rightarrow x} \frac{1}{\lambda^{|P|}}
    +\nu\cdot \sum_{P':x\rightarrow v} \frac{1}{\lambda^{|P'|}}
,\end{align} which is exactly what the $2\times 2$ linear system in the lemma describes.
\end{proof}
% In conclusion, for two real numbers $\mu$ and $\nu$ to be the entries $\varphi(u)$, $\varphi(v)$ of the eigenvector corresponding to $\lambda$ and matrix $H$, they should satisfy the condition of forming an eigenvector of the matrix $Z$ described above. 
\begin{theorem}
    Given a graph with an involution and potential $Q$ on $v$ and $v'$. If the maximum degree is $m$ and the distance from $v$ to $v'$ is $d$, then
    \begin{align}
        \lambda_1-\lambda_2 >\frac{2}{(Q+m)^{d-1}}
.    \end{align}
\end{theorem}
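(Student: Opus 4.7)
The plan is to apply Lemma~\ref{le:z} at the endpoints $v,v'$ and exploit the involution symmetry to reduce the eigenvalue conditions for $\lambda_1$ and $\lambda_2$ to two scalar equations, then extract a lower bound on $\lambda_1-\lambda_2$ from the difference of those equations. Because $\sigma$ induces a bijection between closed walks at $v$ and closed walks at $v'$, we have $Z_{vv}(\lambda) = Z_{v'v'}(\lambda)$, so the $2\times 2$ matrix in Lemma~\ref{le:z} has symmetric form with eigenvectors $(1,\pm 1)$ and eigenvalues $Z_{vv}(\lambda)\pm Z_{vv'}(\lambda)$. Matching $(1,1)$ to the top eigenvector of $\lambda_1\in\pi^+$ and $(1,-1)$ to that of $\lambda_2\in\pi^-$, and multiplying each equation through by $\lambda_i$, yields $\lambda_i Z_{vv}(\lambda_i)\pm\lambda_i Z_{vv'}(\lambda_i) = \lambda_i - Q$.

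Subtracting the two equations produces the clean identity
\begin{align*}
\lambda_1-\lambda_2 = \bigl[\lambda_1 Z_{vv}(\lambda_1) - \lambda_2 Z_{vv}(\lambda_2)\bigr] + \bigl[\lambda_1 Z_{vv'}(\lambda_1) + \lambda_2 Z_{vv'}(\lambda_2)\bigr].
\end{align*}
I would lower-bound the second bracket using the fact that every walk from $v$ to $v'$ has length at least $d$: this gives $Z_{vv'}(\lambda)\geq 1/\lambda^d$, and combined with Gershgorin's upper bound $\lambda_i\leq Q+m$ one gets $\lambda_i Z_{vv'}(\lambda_i)\geq 1/(Q+m)^{d-1}$, so the second bracket is at least $2/(Q+m)^{d-1}$. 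For the first bracket, $\lambda Z_{vv}(\lambda) = \sum_{k\geq 2}b_k/\lambda^{k-1}$ (where $b_k$ counts length-$k$ walks $v\to v$ not passing through $v$ or $v'$ interiorly) is strictly decreasing in $\lambda$, so this bracket is non-positive; its magnitude, via the mean value theorem and the trivial walk bound $b_k\leq m^k$, is at most $(\lambda_1-\lambda_2)\cdot m^2/(Q-2m)^2$, which is small provided $Q$ is not too close to $m$.

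Combining yields $(\lambda_1-\lambda_2)\bigl(1+m^2/(Q-2m)^2\bigr)\geq 2/(Q+m)^{d-1}$; for $Q$ sufficiently large compared to $m$ the correction factor is close to $1$ and the stated lower bound follows. The main obstacle will be the $Z_{vv}$-correction: the non-positivity of the first bracket cuts into the bound coming from the second, so to recover the clean constant $2$ one must either improve the walk-counting bound on $b_k$ or exploit the strict inequality $\lambda_1<Q+m$ (available from the Rayleigh quotient analysis in Theorem~\ref{thm:lambda_1}) to create slack in the $Z_{vv'}$ estimate that absorbs the $Z_{vv}$ loss.
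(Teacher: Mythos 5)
Your proposal follows essentially the same route as the paper: the same $Z$-matrix identity from Lemma~\ref{le:z}, the same $(1,\pm 1)$ symmetrization coming from $Z_{vv}=Z_{v'v'}$, and the same splitting of $\lambda_1-\lambda_2$ into a closed-walk correction term plus a $v\to v'$ term bounded below by the shortest walk of length $d$ together with the Gershgorin bound $\lambda_i\leq Q+m$. The one point you flag as an obstacle --- that the non-positive $Z_{vv}$ bracket yields a multiplicative correction factor exceeding $1$, which must be absorbed before one can claim the clean constant $2$ --- is in fact glossed over in the paper's own argument as well, so your explicit treatment of it (via the strict slack $\lambda_i<Q+m$) is, if anything, a more careful version of the same proof.
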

\begin{proof}
    
When the graph has an involution $\sigma$ and $u=\sigma(v)$, taking a walk from $u$ to itself is equivalent to taking a walk from $v$ to $v$. Thus, the matrix in Lemma \ref{le:z} has eigenvectors $\begin{bmatrix}
    1&1
\end{bmatrix}^T$ and $\begin{bmatrix}
    1&-1
\end{bmatrix}^T$. 
According to Lemma \ref{le:H}$, \mu=\nu$ if $\lambda\in \pi^+$ and $\mu=-\nu$ if $\lambda\in \pi^-$. Therefore, $\lambda_1$ and $\lambda_2$ satisfy 
\begin{align}
    \lambda_1(Z_{vv}+ Z_{vv'})=\lambda_1-Q \text{  and  } \lambda_2(Z_{vv}- Z_{vv'})=\lambda_2-Q
.\end{align}
After comparing these two equations, we obtain
\begin{align}\label{eq:1.21}
    \lambda_1-\lambda_2=
    \sum_{P:v\rightarrow v} \left(\frac{1}{\lambda_1^{|P|-1}}
    -\frac{1}{\lambda_2^{|P|-1}}
    \right)+\sum_{P:v\rightarrow v'} \left(\frac{1}{\lambda_1^{|P|-1}}
    +\frac{1}{\lambda_2^{|P|-1}}\right)
.\end{align}
Move the first summation to the left-hand side, it becomes
\begin{align}
(\lambda_1-\lambda_2)\left(1+\frac{n_2}{\lambda_1\lambda_2}+\frac{n_3(\lambda_1+\lambda_2)}{\lambda_1\lambda_2}+...\right)>\lambda_1-\lambda_2
.\end{align}
Here $n_k$ is the number of walks from $v$ to itself of length $k$. 
The right hand side contains $\frac{1}{\lambda_1^{d-1}}$ and $\frac{1}{\lambda_2^{d-1}}$. Therefore, 
\begin{align}
    \lambda_1-\lambda_2 > \frac{1}{\lambda_1^{d-1}}+\frac{1}{\lambda_2^{d-1}}>\frac{2}{(Q+m)^{d(v,v')-1}}
.\end{align}
Consequently, the time it takes for $p(t)$ to be arbitrarily close to 1 is less than $\frac{\pi}{2}(Q+m)^{d-1}$.
\end{proof}

% \nocite{*}
\bibliographystyle{plain}
\bibliography{ref}

\end{document}